\newtheorem{theorem}{Theorem}
\newcommand{\GHD}{\mathsf{GHD}}
\newcommand{\E}{\mathrm{E}}
\newcommand{\D}{\mathrm{D}}
\newcommand{\diam}{\mathrm{diam}}
\newtheorem{lemma}{Lemma}
\newtheorem{proposition}{Proposition}
\newtheorem{definition}{Definition}
\title{Some Bounds on Communication Complexity of Gap Hamming Distance}
\author{Alexander Kozachinskiy\footnotemark[1]\\[2mm]
\footnotemark[1]~{Lomonosov Moscow State University, kozlach@mail.ru}}
\begin{document}
\maketitle
\begin{abstract}
In this paper we obtain some bounds on communication complexity of Gap Hamming Distance problem ($\GHD^n_{L, U}$): Alice and Bob are given binary string of length $n$ and they are guaranteed that Hamming distance between their inputs is either $\le L$ or $\ge U$ for some $L < U$. They have to output 0, if the first inequality holds, and 1, if the second inequality holds. 

In this paper we study the communication 
complexity of $\GHD^n_{L, U}$ for probabilistic
protocols with one-sided error and for deterministic
protocols. 
Our first result is a protocol  
which communicates 
$O\left(\left(\frac{s}{U}\right)^\frac{1}{3} \cdot n\log n\right)$ bits
and has  one-sided error probability $e^{-s}$ provided 
$s \ge \frac{(L + \frac{10}{n})^3}{U^2}$.

Our second result is about 
deterministic communication complexity of $\GHD^n_{0,\, t}$.
Surprisingly, it can be computed with logarithmic precision:
$$\D(\GHD^n_{0,\, t}) =  n - \log_2 V_2\left(n, \left\lfloor\frac{t}{2}\right\rfloor\right) + O(\log n),$$
where $V_2(n, r)$ denotes the size of Hamming ball of radius $r$.
 As an application of this result for every $c < 2$ we prove a $\Omega\left(\frac{n(2 - c)^2}{p}\right)$ lower bound on the space complexity of any $c$-approximate deterministic $p$-pass streaming algorithm for computing the number of distinct elements in a data stream of length $n$ with tokens drawn from the universe $U = \{1, 2, \ldots, n\}$. Previously that lower bound 
was known for $c < \frac{3}{2}$ and for $c < 2$ but with larger $|U|$.

\end{abstract}

\section{Introduction}
\subsection{Gap Hamming Distance Problem}
Given two strings $x = x_1\ldots x_n \in \{0, 1\}^n$, $y = y_1\ldots y_n \in \{0, 1\}^n$, Hamming distance between $x$ and $y$ is defined as the 
number of positions, where $x$ and $y$ differ:
$$H(x, y) = \left|\{i \in \{1, \ldots, n\}\,|\, x_i \neq y_i\}\right|.$$

Let $L<U\le n$ be integer numbers. 
In this paper we consider the following communication problem
$\GHD^n_{L, U}$, called the Gap Hamming Distance problem:
\begin{definition}
\label{GHD}
Let Alice receive an $n$-bit  string $x$ and Bob an  $n$-bits string $y$ 
such that either $H(x, y) \le L$, or $H(x, y) \ge U$. They have to output 0, if the first inequality holds, and 1, if the second inequality holds. If the promise is not fulfilled, they may output anything.
\end{definition}

The Gap Hamming Distance problem is motivated by the problem of approximating the number
of distinct elements in a data stream (see \cite{indyk2003tight}, \cite{brody2010better}). There is the following simple and relatively efficient protocol with shared randomness
to solve $\GHD^n_{L, U}$.
Alice and Bob pick $i \in \{1, \ldots, n\}$ uniformly at random (using shared randomness) 
and check, whether $x_i = y_i$ or not. They repeat it  many times
and then perform some kind of a majority vote: if in more than $\frac{L + U}{2n}$ fraction of trials it happened 
that $x_i = y_i$ they output 0, and they output 1 otherwise.
It can be shown, that $O\left(\frac{snU}{(U - L)^2}\right)$
number of times 
is sufficient to make error probability less than $e^{-s}$. Hence
\begin{equation}
\label{sampling_protocol}
R_{e^{-s}}\left(\GHD^n_{L, U}\right) = O\left(\frac{snU}{(U - L)^2}\right).
\end{equation}
Here $R_\varepsilon(f)$ denotes randomized public-coin communication complexity of $f$ with error probability $\varepsilon$.

Previously the Gap Hamming Distance problem was studied in the symmetric case: $L = \frac{n}{2} - \gamma, U = \frac{n}{2} + \gamma$. 
Let $\GHD^n_\gamma$ stand for $\GHD^n_{L,U}$
for these specific values of $L$ and $U$. 
In this notation the bound \eqref{sampling_protocol} becomes $O\left(\frac{n^2}{\gamma^2}\right)$ (for a constant error, say, $\frac{1}{3}$). It turns out that this bound is tight:

\begin{theorem}[\cite{chakrabarti2012optimal}]$R_\frac{1}{3}(\GHD^n_\gamma) = \Theta\left(\min\left\{\frac{n^2}{\gamma^2}, n \right\}\right).$
\end{theorem}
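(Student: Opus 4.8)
The plan is to prove matching upper and lower bounds, with the substance entirely in the lower bound. The upper bound $R_\frac{1}{3}(\GHD^n_\gamma) = O(\min\{n^2/\gamma^2,\, n\})$ is the easy direction: Alice can deterministically send her whole string in $n$ bits, while specializing the sampling bound \eqref{sampling_protocol} to $L = \frac{n}{2} - \gamma$, $U = \frac{n}{2} + \gamma$ with constant error ($s = O(1)$) gives $O(n^2/\gamma^2)$, since $U = \Theta(n)$ and $U - L = 2\gamma$; running whichever protocol is cheaper yields the claimed minimum.

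For the lower bound I would first isolate the balanced core case $\gamma = \Theta(\sqrt n)$, where $\min\{n^2/\gamma^2,\, n\} = \Theta(n)$, prove $R_\frac{1}{3}(\GHD^n_{\sqrt n}) = \Omega(n)$ there, and then transfer to arbitrary $\gamma$ by two reductions. When $\gamma \le \sqrt n$ the gap is narrower, so every input obeying the $\sqrt n$-promise also obeys the $\gamma$-promise with the same intended answer; hence a protocol for $\GHD^n_\gamma$ already solves $\GHD^n_{\sqrt n}$ and must cost $\Omega(n)$. When $\gamma \ge \sqrt n$, set $m = \Theta(n^2/\gamma^2) \le n$ and $k = n/m = \gamma^2/n \ge 1$, and embed $\GHD^m_{\sqrt m}$ into $\GHD^n_\gamma$ by repeating each of the $m$ coordinates $k$ times, so that $H(x,y) = k\, H(u,v)$ and $k\sqrt m = \gamma$ maps the window $\frac{m}{2} \pm \sqrt m$ exactly onto $\frac{n}{2} \pm \gamma$; this inherits the $\Omega(m) = \Omega(n^2/\gamma^2)$ bound. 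Checking that the repetition lands distances in the correct promise windows is routine but should be verified for both regimes.

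The heart of the matter is thus the $\Omega(n)$ bound at $\gamma = \Theta(\sqrt n)$, which I would establish via the corruption (one-sided rectangle) bound. Fix a hard distribution $\mu$ on pairs $(x,y) \in \{0,1\}^n \times \{0,1\}^n$ that is roughly balanced between the two promise cases, e.g.\ draw $x$ uniformly and $y = x \oplus z$ with $z$ chosen so that $H(x,y) = \frac{n}{2} \pm c\sqrt n$ with either sign about equally likely. Passing to $\pm 1$ notation via $H(x,y) = (n - \langle x, y\rangle)/2$, the promise separates $\langle x, y\rangle \le -2\gamma$ from $\langle x, y\rangle \ge 2\gamma$. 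The corruption bound then reduces the whole estimate to an anti-concentration statement: for every rectangle $A \times B$ with $|A|, |B| \ge 2^{n - o(n)}$, the conditional distribution of $\langle x, y\rangle$ over uniform $(x,y) \in A \times B$ cannot be almost entirely supported on one side of the gap, but instead places $\Omega(1)$ mass within $O(\sqrt n)$ of $0$ on each side.

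I expect this anti-concentration lemma to be the main obstacle, and I would prove it using hypercontractivity / small-set expansion: for the purpose of the inner product $\langle x, y\rangle$, exponentially large sets $A$ and $B$ behave almost like the whole cube, for which $\langle x, y\rangle$ is $\Theta(\sqrt n)$-spread around $0$, so no large rectangle can look $1$-monochromatic up to a small corruption. Feeding the resulting per-rectangle mass bound $\beta = 2^{-\Omega(n)}$, together with the constant total mass $\mu(\GHD^{-1}(1)) = \Omega(1)$, into the corruption bound gives $R_\frac{1}{3}(\GHD^n_{\sqrt n}) \ge \log_2\bigl(\Omega(1)/\beta\bigr) - O(1) = \Omega(n)$, completing the core case and hence, through the two reductions above, the full $\Omega(\min\{n^2/\gamma^2,\, n\})$ lower bound.
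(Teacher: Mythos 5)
Your upper bound and your two reductions are correct, and they coincide with how the paper treats this statement: the paper does not prove Theorem~1 at all, but cites it, remarking that the general case follows from the core case $\gamma = c\sqrt n$ via exactly your repetition reduction $R_\frac{1}{3}(\GHD^{n/k}_{\gamma/k}) \le R_\frac{1}{3}(\GHD^n_\gamma)$ with $k = \Theta(\gamma^2/n)$, and citing three published proofs (Chakrabarti--Regev, Vidick, Sherstov) for the core $\Omega(n)$ bound. So the entire substance of your proposal is your sketch of that core case, and that sketch has a genuine gap.

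The anti-concentration lemma you plan to prove is false as stated. Take $A = B = \left\{x \in \{-1,1\}^n : \sum_i x_i \ge n^{0.9}\right\}$. Then $|A| = 2^{n - \Theta(n^{0.8})} = 2^{n - o(n)}$, yet for $(x,y)$ drawn from $A \times B$ the inner product $\langle x, y\rangle$ concentrates around $n \cdot (n^{-0.1})^2 = n^{0.8} \gg \sqrt n$: it places essentially no mass within $O(\sqrt n)$ of $0$ and none on the negative side. Worse, this is not a fixable technicality but the known obstruction to your whole approach: under any natural hard distribution this rectangle is almost entirely $0$-monochromatic for $\GHD$ (large positive inner product means small Hamming distance) while having measure $2^{-O(n^{0.8})} = 2^{-o(n)}$, so the plain corruption bound applied directly to $\GHD$ can never certify more than $o(n)$. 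This is precisely why Chakrabarti and Regev had to work with a smoothed/relaxed rectangle bound, and why Sherstov's corruption argument is carried out for the Gap Orthogonality problem --- where a pair with $|\langle x,y\rangle|$ large \emph{of either sign} is a $1$-input, so the true (Talagrand-based, and far from a routine hypercontractivity exercise) escape lemma ``$|\langle x,y\rangle| \ge \epsilon\sqrt n$ with probability $\Omega(1)$ on every large rectangle'' does bound corruption --- followed by a separate reduction from Gap Orthogonality to $\GHD$. There is also a quantitative mismatch in your setup: a lemma restricted to rectangles with both sides of size $2^{n - o(n)}$ only forces mostly-$0$ rectangles to have measure $2^{-o(n)}$, hence yields only an $o(n)$ lower bound; for $\Omega(n)$ you need control of all rectangles of measure $2^{-\delta n}$ for some constant $\delta > 0$.
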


The most difficult case is $\gamma = c\sqrt{n}$, where $c$ is a constant,
in which case the lower  bound becomes
$R_\frac{1}{3}(\GHD^n_{c\sqrt{n}}) = \Omega(n)$.
There are several proofs of this bound \cite{chakrabarti2012optimal}, \cite{vidick2012concentration}, \cite{sherstov2012communication}. As noted in \cite{chakrabarti2012optimal}, for other values of $\gamma$ the bound 
can be proved via the following reduction: 
$$R_\frac{1}{3}(\GHD^{n/k}_{\gamma/k}) \le R_\frac{1}{3}(\GHD^n_\gamma)$$
for $k > 1$. Setting $k = \Theta\left(\frac{\gamma^2}{n}\right)$
in this inequality, we can reduce Theorem~1 to its special case.

To the best of our knowledge, 
$\GHD$ has not been studied for $L + U \neq n$ (except simple inequality (\ref{sampling_protocol})). This paper establishes
new bounds on communication complexity of $\GHD$ with different parameters and in different settings.

\subsection{Our Results}
In section \ref{section_3} we provide the following upper bound on randomized communication complexity of $\GHD$ with one-sided error. Before claim it, let us fix our notations. In this paper $R^i_\varepsilon(f)$ stands for the minimal possible depth of the communication protocol  with shared randomness, which never errs on inputs from $f^{-1}(i)$ and which errs with probability at most $\varepsilon$ on inputs from $f^{-1}(1 - i)$ (here $f$ is partial Boolean function).

\begin{theorem}
\label{upper_bound}
If $s \ge \frac{\left(L + \frac{10}{n}\right)^3}{U^2}$, then
$$R_{e^{-s}}^0 (\GHD^n_{L,\, U})= O\left( \left(\frac{s}{U}\right)^\frac{1}{3}\cdot n\log n \right).$$
\end{theorem}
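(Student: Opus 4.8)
The plan is to give a one-sided \emph{sampling} protocol that never outputs $1$ unless it has actually witnessed more than $L$ disagreements, so that correctness on $f^{-1}(0)$ (the inputs with $H(x,y)\le L$) is automatic. Concretely, using the shared randomness Alice and Bob fix a uniformly random set $S \subseteq \{1,\ldots,n\}$ of size $m$, where I will take $m = \Theta\!\left(\left(\frac{s}{U}\right)^{\frac13} n\right)$. They then determine the number $d_S = \left|\{i \in S : x_i \ne y_i\}\right|$ of disagreements inside $S$ and output $1$ iff $d_S > L$. Since $d_S \le H(x,y)$ always, whenever $H(x,y) \le L$ we have $d_S \le L$ and the protocol outputs $0$; thus it never errs on $f^{-1}(0)$, which is exactly the requirement built into $R^0$. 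To learn $d_S$ it suffices for Alice to send $x|_S$, costing $O(m)$ bits; if one prefers to transmit only the disagreeing coordinates one can instead run a syndrome / set-reconciliation sub-protocol that reconstructs $x|_S \oplus y|_S$ (and certifies $d_S > L$ when reconstruction fails) at a cost of $O(\log n)$ bits per recovered coordinate, which is what produces the $\log n$ factor in the stated bound.

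It then remains to bound the error on inputs with $H(x,y)\ge U$. Here $d_S$ is a hypergeometric variable: a sample of size $m$ drawn without replacement from a population of $n$ coordinates containing $H = H(x,y) \ge U$ disagreements, with mean $\mu = \frac{Hm}{n} \ge \frac{Um}{n}$. Because the indicator summands are negatively associated, the multiplicative Chernoff lower tail applies, giving $P\!\left(d_S \le (1-\delta)\mu\right) \le e^{-\delta^2 \mu/2}$. I will fix the constant in $m$ so that $\frac{Um}{n} \ge \max\{2L,\, 8s\}$; then $L \le \mu/2$ and $\mu \ge 8s$, whence $P(d_S \le L) \le P(d_S \le \mu/2) \le e^{-\mu/8} \le e^{-s}$, which is the desired one-sided error.

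The crux is that both defining inequalities for $m$ are affordable. Writing $\frac{Um}{n} = \Theta\!\left(s^{1/3}U^{2/3}\right)$ for the chosen $m$, the requirement $\frac{Um}{n} \ge 2L$ is equivalent to $s \gtrsim L^3/U^2$, which is precisely the hypothesis $s \ge \left(L + \frac{10}{n}\right)^3/U^2$ — the additive $\tfrac{10}{n}$ and the rounding of $m$ to an integer are exactly what force me to keep a margin and to carry $L + \tfrac{10}{n}$ rather than $L$. The requirement $\frac{Um}{n} \ge 8s$ is equivalent to $s \lesssim U$, and this costs nothing: if $s \gtrsim U$ then $\left(\frac{s}{U}\right)^{1/3} n \log n \gtrsim n$ and the claimed bound is already met by the trivial protocol in which Alice sends her whole input. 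Hence $m = O\!\left(\left(\frac{s}{U}\right)^{1/3} n\right)$ suffices in every regime, and the communication is $O\!\left(\left(\frac{s}{U}\right)^{1/3} n \log n\right)$. I expect the main obstacle to be the clean execution of this squeeze: producing an explicit hypergeometric lower-tail estimate with workable constants, correctly handling the boundary case $L = 0$ (where $d_S > L$ degenerates to finding a single disagreement, and $P(d_S = 0) \le (1 - U/n)^m \le e^{-Um/n}$ must be driven below $e^{-s}$), and checking that the integer value of $m$ together with the $\tfrac{10}{n}$ correction preserves $\tfrac{Um}{n} \ge 2L$ uniformly over all admissible $L, U, s$.
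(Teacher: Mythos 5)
Your proposal is correct, but it takes a genuinely different route from the paper. The paper's protocol is projection-based: it pads and splits the inputs into $b = \left\lceil 4n(s/U)^{1/3}\right\rceil$ blocks of size $a \approx n/b$, samples a uniform unit vector per block, has Alice send each inner product $\langle \alpha_i, U_i\rangle$ rounded to precision $1/n^3$ (hence $O(\log n)$ bits per block), and lets Bob threshold $T' = \sum_i (r_i - \langle\beta_i,U_i\rangle)^2$ at $L + 5/n$. One-sidedness there comes from the same structural fact you exploit, namely that the statistic never exceeds $H(x,y)$ (projections only shrink: $\sum_i \langle\alpha_i-\beta_i,U_i\rangle^2 \le \sum_i \|\alpha_i-\beta_i\|^2 = H(x,y)$), and the error on the $1$-side is controlled by Hoeffding's inequality for the independent bounded summands $\langle\alpha_i-\beta_i,U_i\rangle^2 \in \left[0,\|\alpha_i-\beta_i\|^2\right]$. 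You replace the projections with plain coordinate sampling without replacement and replace Hoeffding with the hypergeometric (negative-association) Chernoff lower tail; the numerology is otherwise identical: your $m$ and the paper's $b$ are both $\Theta\left((s/U)^{1/3}n\right)$, both arguments use the hypothesis $s \ge (L+\frac{10}{n})^3/U^2$ only to push the mean of the statistic a constant factor above $L$, and both fall back to the trivial protocol when $s = \Omega(U)$. What your route buys: it is more elementary, it needs none of the rounding bookkeeping (the paper's $\frac{5}{n}$ and $\frac{10}{n}$ slack exists solely to absorb rounding error in transmitted reals), and since each sampled coordinate costs one bit rather than $O(\log n)$, it actually proves the stronger bound $O\left((s/U)^{1/3}n\right)$ --- the $\log n$ factor in the statement is an artifact of the paper's protocol, not a necessity. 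Two small remarks: drop the syndrome/set-reconciliation aside, which is the only shaky part of your write-up (deterministically certifying decoding failure needs care) and is unnecessary because sending $x|_S$ already beats the claimed bound; and the $L=0$ case needs no special handling, since $\Pr\left[d_S \le \mu/2\right] \le e^{-\mu/8}$ covers it.
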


Let us compare this bound with the upper bound \eqref{sampling_protocol}
for protocols with  two-sided error. For simplicity 
assume that $L < \frac{U}{2}$.
Then \eqref{sampling_protocol} becomes
$$
R_{e^{-s}}(\GHD^n_{L,\, U}) = O\left(\frac{s}{U} \cdot n \right).
$$
Instead of $\frac{s}{U}$, theorem \ref{upper_bound} has 
$\left(\frac{s}{U}\right)^\frac{1}{3}$, 
which is bigger than  $\frac{s}{U}$ when $s < U$. 
As  $s$ tends to $U$, both 
$\frac{s}{U}$ and $ \left(\frac{s}{U}\right)^\frac{1}{3}$ tend to 1 
and both bounds become trivial.

In section \ref{section_4} we study the deterministic communication complexity of $\GHD^n_{0,\, t}$. Namely, we prove the following theorem

\begin{theorem}
\label{det}
$$\D(\GHD^n_{0,\, t}) =  n - \log_2 V_2\left(n, \left\lfloor\frac{t}{2}\right\rfloor\right) + O(\log n)$$
where $V_2(n, r)$ denotes the size of Hamming ball of radius $r$.
\end{theorem}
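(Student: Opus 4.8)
The plan is to prove the two inequalities separately, matching up to $O(\log n)$. Let me denote $r = \lfloor t/2\rfloor$ and $V = V_2(n,r)$.

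Let me think about what $\GHD^n_{0,t}$ means. Alice and Bob have strings $x,y \in \{0,1\}^n$, promised either $H(x,y) = 0$ (i.e., $x = y$) or $H(x,y) \ge t$. They must distinguish these.

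The key structural observation is about fooling sets / rectangles. In deterministic communication, $\D(f)$ relates to the number of monochromatic rectangles needed to cover the domain, and lower bounds come from fooling sets (sets where no two distinct inputs can be in the same monochromatic rectangle).

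For the lower bound (showing $\D \ge n - \log_2 V - O(\log n)$): I want a large fooling set. Consider the "equal" inputs — pairs $(x,x)$ all output $0$. Two such pairs $(x,x)$ and $(y,y)$ with $x\neq y$ cannot lie in a common $0$-rectangle, because then $(x,y)$ would also be in that rectangle, but if $H(x,y) \ge t$ then $(x,y)$ is a legitimate $1$-input, contradiction. So I need a code $C \subseteq \{0,1\}^n$ with minimum distance $\ge t$; then $\{(x,x) : x \in C\}$ is a fooling set, giving $\D \ge \log_2 |C|$. By Gilbert–Varshamov, there exists a code with $|C| \ge 2^n / V_2(n, t-1)$. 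Since $V_2(n,t-1)$ and $V_2(n,r)$ differ by polynomial-in-$n$ factors only when... actually they differ substantially. I need to be more careful: the right fooling set should give $2^n/V_2(n,r)$, which requires pairs at distance... Let me reconsider. The fooling set trick with diagonal pairs gives distance $\ge t$, hence $V_2(n,t-1)$ in the denominator, not $V_2(n,r)$ with $r = \lfloor t/2\rfloor$. The factor-of-two in the radius is the crux and is not captured by the naive diagonal fooling set, so the real lower bound argument must be cleverer — this is the main obstacle.

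The resolution is that the relevant quantity is covering/packing at radius $r = \lfloor t/2\rfloor$: a rectangle's $0$-monochromatic structure forces both Alice's and Bob's projections into small sets. Concretely, I would argue that if $A \times B$ is a $0$-rectangle, then $A$ and $B$ must each have diameter $< t$, and moreover any $x \in A, y \in B$ satisfy $H(x,y) < t$. By a packing argument, the number of such rectangles needed to cover the diagonal is at least $2^n / (\text{max size of a set of diameter} < t)$. The key geometric fact is that a set of diameter $< t$ in the Hamming cube has size at most $V_2(n, \lfloor t/2 \rfloor)$ up to lower-order terms — a set of diameter $< t$ is contained in (a bounded number of) balls of radius $\lfloor t/2\rfloor$; this is essentially a Hamming-space analogue of Jung's theorem, and it is where the $\lfloor t/2\rfloor$ enters. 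For the upper bound (showing $\D \le n - \log_2 V + O(\log n)$): I would design an explicit protocol. Alice sends a hash / the index of a covering code ball containing $x$; using a good covering code with balls of radius $\lfloor t/2\rfloor$, there are about $2^n/V$ centers, so $\log_2(2^n/V) = n - \log_2 V$ bits name a center $c$ with $H(x,c) \le r$. Bob checks whether $H(c, y) \le r$: if $x = y$ then $H(c,y) = H(c,x) \le r$ so Bob outputs $0$; if $H(x,y)\ge t$ then by the triangle inequality $H(c,y) \ge H(x,y) - H(x,c) \ge t - r > r$ so Bob outputs $1$. This gives a one-way protocol of cost $n - \log_2 V + O(\log n)$, where the $O(\log n)$ absorbs the gap between the sizes of optimal covering versus packing codes (both are $2^n/V$ up to $\mathrm{poly}(n)$ factors by standard volume bounds).

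I expect the upper bound to be the cleaner direction, essentially a covering-code construction plus triangle inequality. The hard part will be the lower bound: turning the packing intuition into a rigorous fooling-set-style argument and, in particular, nailing down that sets of Hamming diameter below $t$ have size at most $V_2(n,\lfloor t/2\rfloor)\cdot \mathrm{poly}(n)$, so that the radius is $\lfloor t/2\rfloor$ rather than $t-1$. I would handle this either via a direct isoperimetric/shifting argument or by reducing to the statement that a diameter-$<t$ set lies within a ball of radius $\lfloor t/2\rfloor$ after a bounded amount of slack, and then verify that the multiplicative $\mathrm{poly}(n)$ slack only costs $O(\log n)$ bits after taking logarithms, which is exactly the precision promised in the theorem.
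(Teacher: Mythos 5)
Your overall architecture matches the paper's: a covering-code-plus-triangle-inequality protocol for the upper bound, and for the lower bound a count of the $0$-monochromatic rectangles needed to cover the diagonal, combined with the fact that a subset of the cube of Hamming diameter less than $t$ has size at most $V_2\left(n,\lfloor t/2\rfloor\right)$. However, two of your steps fail as written. In the upper bound there is an off-by-one error that kills the protocol for even $t$: with covering radius $r=\lfloor t/2\rfloor$ you claim $H(c,y)\ge t-r>r$ whenever $H(x,y)\ge t$, but for even $t$ one has $t-r=r$, so Bob's threshold test is ambiguous. Concretely, for $t=2$ take $x=00\ldots0$, $c=10\ldots0$, $y=110\ldots0$: then $H(x,c)=1$, $H(x,y)=2\ge t$, yet $H(c,y)=1\le r$, so your protocol outputs $0$ on a $1$-input. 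The paper avoids this by using a covering code of radius $\lfloor (t-1)/2\rfloor$ (so that $t>2\lfloor(t-1)/2\rfloor$ holds strictly), and then restores the stated form of the bound via the estimate $V_2\left(n,\lfloor t/2\rfloor\right)\le (n+1)\,V_2\left(n,\lfloor (t-1)/2\rfloor\right)$, which costs only $O(\log n)$ bits; your proposal is missing both the radius adjustment and this volume-comparison step.

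In the lower bound, the geometric fact you isolate is exactly Kleitman's diameter theorem, which the paper simply cites from the covering-codes literature (if $\diam(B)\le 2r$ and $n\ge 2r+1$ then $|B|\le V_2(n,r)$), and citing it is the right move --- but the proof routes you sketch do not work. A set of diameter $2r$ need not be contained in a ball of radius $r$, even with slack: $\{000,110,101,011\}\subset\{0,1\}^3$ has diameter $2$ but lies in no ball of radius $1$ (it is in fact an extremal example, of size exactly $V_2(3,1)$). So the Jung-type containment argument you propose is false as stated, and no simple containment argument is known; the actual proofs of Kleitman's theorem are genuinely nontrivial (induction/compression), so leaving this as something to be nailed down later is the main gap in your lower bound. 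A smaller inaccuracy: the projections $A$ and $B$ of a $0$-rectangle need not each have diameter $<t$ (the cross condition only yields $\diam(A)<2t$); what is true, and what you also state, is that $H(x,y)<t$ for all $x\in A$, $y\in B$, and this is what the argument needs, since it forces $\diam(A\cap B)<t$ and the diagonal points covered by the rectangle correspond exactly to $A\cap B$.
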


We use this result to prove the following lower bound on space complexity of approximating the number of distinct elements in a data stream:
\begin{theorem}
\label{streaming_lower_bound}
Assume that $1 < c < 2$ and $A$ is a $p$-pass deterministic 
streaming algorithm for estimating $F_0$, 
the number of distinct elements in a given data stream of size $2n$ 
with tokens drawn from the universe $U =\{1, 2, \ldots, 2n\}$. 
If  $A$ outputs a number $E$ such that $F_0 \le E < cF_0$, 
then $A$ must use linear space, 
namely $\Omega\left(\frac{n(2 - c)^2}{p}\right)$.
\end{theorem}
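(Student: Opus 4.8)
The plan is to reduce $\GHD^n_{0,\,t}$ to the $F_0$-estimation problem and then invoke the deterministic lower bound of Theorem~\ref{det}. First I would set up the encoding. Given Alice's string $x\in\{0,1\}^n$ and Bob's string $y\in\{0,1\}^n$, I turn them into a stream over the universe $\{1,\dots,n\}\times\{0,1\}$ of size $2n$: Alice emits the $n$ tokens $(i,x_i)$ and Bob emits the $n$ tokens $(i,y_i)$, giving a stream of length $2n$. For each coordinate $i$ the tokens $(i,x_i)$ and $(i,y_i)$ coincide exactly when $x_i=y_i$ and are distinct otherwise, so
$$F_0 = (n - H(x,y)) + 2H(x,y) = n + H(x,y).$$

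Next I would fix the gap. Setting $t=\lceil (c-1)n\rceil$, the promise $H(x,y)=0$ forces $F_0=n$, so any valid estimate obeys $E<cn\le n+t$, whereas $H(x,y)\ge t$ forces $F_0\ge n+t$, so $E\ge F_0\ge n+t$. Hence comparing the reported value $E$ against the threshold $n+t$ correctly solves $\GHD^n_{0,\,t}$. Since $1<c<2$ we have $1\le t\le n$, so the instance is legitimate and $\lfloor t/2\rfloor<n/2$. I would then simulate the streaming algorithm by a deterministic protocol in the usual way: running $A$ pass by pass, Alice processes her half of the stream and ships the $S$-bit memory state to Bob, Bob processes his half, and for each later pass the state is sent back and forth; over $p$ passes this costs $(2p-1)S=O(pS)$ bits, and the final memory state determines the output. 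As $A$ is deterministic this is a deterministic protocol, so $\D(\GHD^n_{0,\,t})=O(pS)$, i.e.\ $S=\Omega\!\left(\D(\GHD^n_{0,\,t})/p\right)$.

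The crux is turning Theorem~\ref{det} into the claimed bound. By that theorem $\D(\GHD^n_{0,\,t}) = n - \log_2 V_2(n,\lfloor t/2\rfloor) + O(\log n)$, and the standard volume estimate $\log_2 V_2(n,\beta n) = n H_2(\beta)+O(\log n)$ for $\beta<1/2$ (where $H_2$ is the binary entropy function, not to be confused with the Hamming distance $H$) gives $\D(\GHD^n_{0,\,t}) = n\big(1 - H_2((c-1)/2)\big) + O(\log n)$. The decisive computation is a Taylor expansion of $H_2$ at $1/2$: writing $(c-1)/2 = 1/2 - (2-c)/2$ and using $H_2''(1/2)=-4/\ln 2$ yields $1 - H_2((c-1)/2) = \frac{(2-c)^2}{2\ln 2} + O((2-c)^4) = \Theta((2-c)^2)$. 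Combining this with the simulation bound gives $S = \Omega\!\left(n(2-c)^2/p\right)$.

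I expect the main obstacle to be quantitative rather than structural. The structural reduction and the pass-by-pass simulation are routine; the content is in keeping $t$ as small as the approximation guarantee permits (so that $V_2(n,\lfloor t/2\rfloor)$ stays small and the communication lower bound stays as large as possible), in verifying that the $\Theta((2-c)^2)\cdot n$ main term dominates the $O(\log n)$ error of Theorem~\ref{det} in the regime of constant $c<2$, and in controlling the near-$1/2$ entropy expansion so that the constant $(2-c)^2$ factor emerges cleanly.
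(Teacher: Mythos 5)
Your proposal is correct and follows essentially the same route as the paper: the same position-tagged encoding giving $F_0 = n + H(x,y)$ over a universe of size $2n$, the same threshold $t = \lceil (c-1)n\rceil$ with comparison of $E$ against $n+t$, the same pass-by-pass simulation costing $O(pS)$ bits, and the same appeal to Theorem~\ref{det}. The only difference is cosmetic: you spell out the entropy estimate $n - \log_2 V_2\left(n,\lfloor t/2\rfloor\right) = \Theta\left(n(2-c)^2\right)$ via a Taylor expansion of the binary entropy at $1/2$, a step the paper leaves implicit when it writes $\Omega\left(\left(\frac{1}{2} - \frac{t}{2n}\right)^2 n\right)$.
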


Previously such a bound 
was known in the case when the size of the 
universe is constant-time larger than the size of the 
data stream. In the case when the size of the 
universe and the size of the data stream coincide
the bound was known only for $c < \frac{3}{2}$.



\section{Preliminaries}
\subsection{Communication Complexity}

Let $f:\mathcal{X}\times\mathcal{Y}\to\{0, 1\}$ be a Boolean function and 
$R$ an arbitrary random variable  whose support is $\mathcal{R}$.

\begin{definition} 
\label{randomized_protocol}
A randomized 
(public-coin) 
communication protocol is a rooted binary tree, in which each non-leaf vertex is associated either with Alice or with Bob and each leaf is labeled by $0$ or $1$. For each non-leaf vertex $v$, associated with Alice, there is a function $f_v:\mathcal{X}\times\mathcal{R}\to\{0, 1\}$ and for each non-leaf vertex $u$, associated with Bob, there is a function $g_u:\mathcal{Y}\times\mathcal{R}\to\{0, 1\}$. For each non-leaf vertex one of its out-going edges is labeled by $0$ and other one is labeled by $1$.
\end{definition}

\begin{definition}
Communication complexity of a protocol $\pi$, denoted by $CC(\pi)$, is defined as the depth of the corresponding binary tree.
\end{definition}

A computation according to a protocol runs as follows. 
Alice is given $x \in \mathcal{X}$, Bob is given $y \in \mathcal{Y}$. 
They start at the root of tree. 
If they are in a non-leaf vertex $v$, associated with Alice, 
Alice sends $f_v(x, R)$ to Bob and they 
move to the son of $v$ by the edge labeled by $f_v(x, R)$. 
If they are in a non-leaf vertex, associated with Bob, they act in a 
similar same way, however this time it is Bob who sends a bit to
Alice. When they reach a leaf, they output the bit which labels this leaf.

 
\begin{definition}
We say that a randomized protocol  computes $f$ with error probability 
$\varepsilon$, if for every pair of inputs 
$(x, y) \in \mathcal{X}\times\mathcal{Y}$ with
probability at least $1 - \varepsilon$ 
that protocol outputs $f(x, y)$.
Randomized communication complexity of $f$ is defined as
$$
R_{\varepsilon}(f) = \min\limits_{\pi}CC(\pi),
$$
where minimum is over all protocols that  compute $f$ with error probability 
$\varepsilon$.
\end{definition}

If for $i \in \{0, 1\}$ we require that the protocol never errs on 
inputs from $f^{-1}(i)$, then the corresponding notion is called ``randomized one-sided error communication complexity'' and is
denoted by $R_{\varepsilon}^{i}(f)$. 

If $f$ is a partial function, then, in the definition of computation 
with error we consider only inputs from the domain of $f$. 

The Gap Hamming Distance problem is the problem of computing the following partial function: 
$$
\mathsf{GHD}^n_{L, U}(x, y) = 
\begin{cases}
0 & H(x, y) \le L,\\
1 & H(x, y) \ge U,\\
\mathsf{undefined} & U < d(x, y) < L,
\end{cases}\qquad \mbox{for}\, x, y\in\{0, 1\}^n.
$$

A protocol $\pi$ is called \emph{deterministic}, if $\pi$ does not use any randomness.

\begin{definition}
We say that a deterministic protocol computes $f$, if for every possible value  $i\in\{0, 1\}$ and for every pair of inputs from $f^{-1}(i)$ protocol outputs $i$.
Deterministic communication complexity of $f$ is defined as
$$
\D(f) = \min\limits_{\pi}CC(\pi),
$$
where minimum is over all deterministic protocols that compute $f$.

\end{definition}
\subsection{Codes}

In section \ref{section_4} we will use the notion of covering codes.
\begin{definition} A set $C\subset \{0, 1\}^n$ is called a \emph{covering code} of radius $r$, if
$$\forall x\in\{0, 1\}^n \qquad\exists y\in C\qquad H(x, y) \le r.$$
\end{definition}

Obviously, the size a covering code of radius $r$ is at least
$$\frac{2^n}{V_2(n, r)}.$$
There are covering codes with the almost optimal size.
\begin{proposition}[\cite{cohen1997covering}]
\label{good_covering_code}
There is a covering code in $\{0, 1\}^n$ of radius $r$ and size at most $O\left(\frac{n2^n}{V_2(n, r)}\right)$.
\end{proposition}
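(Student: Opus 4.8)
The plan is to establish existence by the probabilistic method together with a patching step. Fix $p = V_2(n,r)/2^n$; by translation invariance of the Hamming metric, this is exactly the probability that a uniformly random point of $\{0,1\}^n$ lies within distance $r$ of any fixed target $x$, i.e.\ the probability that a single uniformly random codeword covers $x$. The first step is to sample $m$ codewords $c_1,\dots,c_m$ independently and uniformly from $\{0,1\}^n$ (with replacement). For a fixed $x$, the probability that none of them covers $x$ is $(1-p)^m \le e^{-pm}$, so by linearity of expectation the expected number of uncovered points is at most $2^n e^{-pm}$.

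The second step is to turn this expectation bound into a genuine covering code. I would fix one realization of $c_1,\dots,c_m$ for which the number $N$ of uncovered points is at most its expectation, and then simply adjoin every uncovered point to the code as an extra codeword; each such point covers itself at distance $0 \le r$. The resulting set is a covering code of radius $r$ of size at most $m + 2^n e^{-pm}$.

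The third step is to optimize over $m$. Choosing $m$ so that $pm = \ln V_2(n,r)$ --- that is, $m \approx \frac{2^n}{V_2(n,r)}\ln V_2(n,r)$ --- makes the patching term $2^n e^{-pm} = 2^n/V_2(n,r)$, so the total size is at most $\frac{2^n}{V_2(n,r)}\bigl(\ln V_2(n,r) + 1\bigr)$. Since $V_2(n,r)\le 2^n$ we have $\ln V_2(n,r) \le n\ln 2$, and the bound collapses to $O\!\left(\frac{n 2^n}{V_2(n,r)}\right)$, as claimed.

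I expect the only real subtlety to be the patching argument in the second step: a pure first-moment calculation would force $m \gtrsim \frac{n 2^n}{V_2(n,r)}$ merely to drive the expected number of uncovered points below $1$, which wastes the gain available when $r$ is small; adjoining the uncovered points lets us take the smaller, near-optimal $m$ of the third step. Two points deserve care here --- that linearity of expectation requires no independence among the ``uncovered'' indicator events (so the correlation between distinct points is harmless), and that a realization attaining at most the expected count always exists by the first-moment principle.
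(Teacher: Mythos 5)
Your proof is correct, but there is nothing in the paper to compare it against: the paper does not prove Proposition~\ref{good_covering_code} at all, importing it from \cite{cohen1997covering}. Your argument is the standard one from that literature (random code plus patching of uncovered points, i.e.\ the Johnson--Stein--Lov\'asz bound $K(n,r)\le \frac{2^n}{V_2(n,r)}\bigl(1+\ln V_2(n,r)\bigr)$), and every step checks out: $p=V_2(n,r)/2^n$ is the covering probability of any fixed point by translation invariance, linearity of expectation requires no independence among the uncovered-point indicators, the first-moment principle yields a realization with at most the expected number of uncovered points, and adjoining those points (each covering itself) costs at most $2^n e^{-pm}$ extra codewords; duplicates from sampling with replacement only help. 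One remark on your closing paragraph: for the proposition \emph{as stated}, the patching step is unnecessary, and your worry about the plain first moment is misplaced. Taking $m=\lceil n\ln 2\cdot 2^n/V_2(n,r)\rceil$ already drives the expected number of uncovered points below $1$, so some realization of the random code covers everything by itself, and this $m$ is $O\left(\frac{n2^n}{V_2(n,r)}\right)$, exactly the claimed order. Your two-step version is still a genuine refinement, since it proves the sharper bound $\frac{2^n}{V_2(n,r)}\bigl(\ln V_2(n,r)+1\bigr)$, which improves on $O\left(\frac{n2^n}{V_2(n,r)}\right)$ precisely when $r$ is small; but for the paper's purposes (the upper bound in Theorem~\ref{det}, which tolerates an additive $O(\log n)$) either version suffices.
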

We will also use the fact that Hamming ball is the largest set among all subsets of $\{0, 1\}^n$ with the same diameter.  
\begin{definition}
Diameter of the set $A\subset \{0, 1\}^n$ is equal to
$$\diam(A) = \max\limits_{x, y\in A} H(x, y).$$
\end{definition}
\begin{theorem}[\cite{cohen1997covering}]
\label{size_theorem}
If $B\subset \{0, 1\}^n$, $\diam(B) \le 2r$ and $n \ge 2r + 1$, then $$|B| \le V_2(n, r).$$
\end{theorem}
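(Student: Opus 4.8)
This is the even-diameter case of Kleitman's classical diameter theorem (equivalently, the statement that a Hamming ball is the largest \emph{anticode} of a given even diameter). I would prove it by the shifting/compression method, identifying each $x\in\{0,1\}^n$ with its support $A\subseteq\{1,\dots,n\}$, so that $H(x,y)=|A\triangle B|$ and $\diam(B)\le 2r$ becomes the requirement that all pairwise symmetric differences have size at most $2r$. The overall plan is to transform $B$ by a sequence of operations that preserve $|B|$ and do not increase $\diam(B)$, arriving at a highly structured ``canonical'' family whose size is directly comparable to $V_2(n,r)$. Before anything else I would record why the hypothesis $n\ge 2r+1$ is essential: for $n=2r$ the whole cube $\{0,1\}^n$ has diameter exactly $2r$ yet size $2^n>V_2(n,r)$, so the theorem is simply false without it; the condition is exactly what forces the ball of radius $r$ (which has diameter $2r$, attained by two weight-$r$ points with disjoint supports, possible since $2r\le n$) to be extremal.

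The engine of the argument is the \emph{$ij$-shift} $S_{ij}$ (for $i<j$): for each set $A$ in the family, replace $A$ by $(A\setminus\{j\})\cup\{i\}$ whenever $j\in A$, $i\notin A$, and that shifted set is not already present, and leave $A$ unchanged otherwise. The key lemma I would establish is that $S_{ij}$ preserves the cardinality of the family and is distance non-increasing, i.e.\ $|S_{ij}(A)\triangle S_{ij}(B)|\le |A\triangle B|$, so that $\diam$ does not grow; this is a finite case analysis on the patterns of $A$ and $B$ on the two affected coordinates $\{i,j\}$. It is important here to use this cardinality-preserving shift rather than the naive ``delete-if-absent'' down-compression, since the latter can increase a single distance by one (pushing a set down at a coordinate where it previously agreed with another set). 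Iterating $S_{ij}$ over all pairs $i<j$ terminates at a \emph{left-shifted} family of the same size and diameter $\le 2r$, and it suffices to bound the size of such a family.

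For the endgame I would induct on $n$, splitting the shifted family along the top coordinate into $\mathcal{F}_0=\{A : n\notin A\}$ and $\mathcal{F}_1=\{A\setminus\{n\} : A\ni n\}$, both living in $\{0,1\}^{n-1}$. Here $\diam(\mathcal{F}_0)\le 2r$, while for $A\in\mathcal{F}_0$ and $B\in\mathcal{F}_1$ the original distance $|A\triangle(B\cup\{n\})|=|A\triangle B|+1\le 2r$ forces the \emph{cross-distance} bound $|A\triangle B|\le 2r-1$. The target is the Pascal identity $V_2(n,r)=V_2(n-1,r)+V_2(n-1,r-1)$, so I want $\mathcal{F}_0$ controlled by a radius-$r$ ball and $\mathcal{F}_1$ by a radius-$(r-1)$ ball. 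I expect this last step to be the main obstacle: the radius drop for $\mathcal{F}_1$ does not follow from $\diam(\mathcal{F}_1)\le 2r$ alone, and must be extracted from the cross-distance bound $2r-1$ together with the left-shiftedness of the family. The clean way to make the induction close is to prove a strengthened two-family ``cross-diameter'' statement (bounding $\mathcal{A},\mathcal{B}$ with $|A\triangle B|\le d$ for all $A\in\mathcal{A},B\in\mathcal{B}$) as the inductive invariant, from which the single-family claim $|B|\le V_2(n,r)$ follows by taking $\mathcal{A}=\mathcal{B}=B$; verifying that this invariant is preserved under the coordinate split, using $n\ge 2r+1$ to keep the parameters in the valid range, is the delicate part of the proof.
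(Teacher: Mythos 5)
You should first know that the paper contains no proof of this statement to compare against: it is Kleitman's diameter theorem (even-diameter case), quoted as a black box from \cite{cohen1997covering}. Your identification of the result, and your remark on why $n\ge 2r+1$ is necessary, are both correct, but the proposal itself has two gaps, one repairable and one fatal. The repairable one: your key lemma is false as stated. The shift $S_{ij}$ is \emph{not} pairwise distance non-increasing. Take $i=1$, $j=2$ and the family $\{\{2,3\},\{2\},\{1\}\}$: the set $\{2,3\}$ shifts to $\{1,3\}$, while $\{2\}$ is blocked because $\{1\}$ is already present, so the pair $(\{2,3\},\{2\})$ at distance $1$ becomes the pair $(\{1,3\},\{2\})$ at distance $3$. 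What is true is only the family-level statement $\diam(S_{ij}(\mathcal{F}))\le\diam(\mathcal{F})$, and its proof must compare the new pair against a \emph{different} old pair: if $A$ moves and $B$ is blocked by $C=(B\setminus\{j\})\cup\{i\}\in\mathcal{F}$, then $|S_{ij}(A)\triangle B|=|A\triangle C|$, and $A,C$ both lie in the original family. This is standard and fixable.

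The fatal gap is the endgame, which you explicitly leave open --- and it is not merely unproven, it is false in the form you propose. You want the split to satisfy $|\mathcal{F}_0|\le V_2(n-1,r)$ and $|\mathcal{F}_1|\le V_2(n-1,r-1)$, matching the Pascal identity. But consider $n=4$, $r=1$, and $\mathcal{F}$ the Hamming ball of radius $1$ around $\{1,2,3,4\}$, i.e.\ $\mathcal{F}=\{\{1,2,3,4\},\{1,2,3\},\{1,2,4\},\{1,3,4\},\{2,3,4\}\}$. This family is left-shifted (stable under every $S_{ij}$), has diameter $2$, satisfies your cross-distance bound after splitting on coordinate $4$, and is extremal, $|\mathcal{F}|=5=V_2(4,1)$; yet the split gives $|\mathcal{F}_0|=1$ and $|\mathcal{F}_1|=4>V_2(3,0)=1$. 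So the ``radius drop for $\mathcal{F}_1$'' cannot be extracted from left-shiftedness plus the cross-distance bound: the mass can sit entirely on the $\mathcal{F}_1$ side, and only the \emph{sum} $|\mathcal{F}_0|+|\mathcal{F}_1|$ is controlled. Any workable inductive invariant must therefore be a two-family statement of sum type that permits this trade-off, and it must simultaneously handle the odd cross-diameter $2r-1$, whose extremal configurations are not balls; formulating and proving that statement is essentially the entire content of Kleitman's argument, and it is exactly the step your proposal defers. As it stands you have a sensible plan with the decisive step missing --- which is presumably why the paper imports the theorem from the literature rather than reproving it.
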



\section{Upper Bound on One-Sided Error Communication Complexity of $\GHD$}
\label{section_3}

Consider any $x, y\in \mathbb{R}^t$. The scalar product and length of a vector are defined in the usual way
$$\langle x, y\rangle = \sum\limits_{i = 1}^t x_i y_i, \qquad \|x\| = \sqrt{\langle x, x\rangle}.$$
Let $\mathcal{US}(t)$ denote the uniform distribution on $(t - 1)$--dimensional unit sphere.
\begin{proposition}[\cite{marsaglia1972choosing}]
\label{us}
$\mathcal{US}(t)$ is equal to the distribution of the following vector
$$\frac{(Z_1, Z_2, \ldots, Z_t)}{\sqrt{Z_1^2 + \ldots + Z_t^2}},$$
where $Z_1, \ldots Z_t$ are independent random variables and for each of them we have that $Z_i\sim \mathcal{N}(0, 1)$.
\end{proposition}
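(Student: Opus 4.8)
The plan is to exploit the rotational symmetry of the standard Gaussian measure on $\mathbb{R}^t$. First I would write down the joint density of $(Z_1, \ldots, Z_t)$, which equals $(2\pi)^{-t/2}\exp\left(-\|z\|^2/2\right)$ and therefore depends on $z$ only through its length $\|z\|$. This immediately gives that the law of the vector $Z = (Z_1, \ldots, Z_t)$ is invariant under every orthogonal transformation: for any $O$ in the orthogonal group $O(t)$ we have $OZ \stackrel{d}{=} Z$, since the density of $OZ$ is obtained from that of $Z$ by a measure-preserving change of variables (the Jacobian of $O$ has absolute value $1$) that leaves $\|z\|$ unchanged.

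Next I would push this symmetry through the normalization map. Set $W = Z/\|Z\|$, which is defined almost surely because $\|Z\| = 0$ occurs only with probability zero. Since orthogonal maps preserve the norm, $\|OZ\| = \|Z\|$, and hence $OW = OZ/\|OZ\|$. Combining this with $OZ \stackrel{d}{=} Z$ yields $OW \stackrel{d}{=} W$ for every $O \in O(t)$. Thus the distribution of $W$ is a Borel probability measure on the unit sphere $S^{t-1}$ that is invariant under the whole orthogonal group.

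Finally I would invoke the uniqueness of the rotation-invariant probability measure on the sphere. The group $O(t)$ acts transitively on $S^{t-1}$, and there is exactly one $O(t)$-invariant Borel probability measure on $S^{t-1}$, namely $\mathcal{US}(t)$; this is the normalized surface measure, realizable as the pushforward of the Haar measure on $O(t)$ along the orbit map $O \mapsto Oe_1$ for a fixed unit vector $e_1$. Since the law of $W$ is such an invariant measure, it must coincide with $\mathcal{US}(t)$, which is exactly the claim. As an alternative to this symmetry argument, one could instead pass to polar coordinates and check directly that the radial part $\|Z\|$ and the angular part $W$ are independent with $W$ uniform, but the invariance route avoids an explicit density computation.

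The routine parts of this argument are the change-of-variables step and the null-set bookkeeping around $\|Z\| = 0$; the conceptual crux — the step I expect to require the most care — is the uniqueness of the invariant probability measure on the sphere, which rests on the transitivity of $O(t)$ on $S^{t-1}$ together with the existence and uniqueness of Haar measure on the compact group $O(t)$.
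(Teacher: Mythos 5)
Your proof is correct. Note, however, that the paper offers no proof of this proposition at all: it is imported as a known result, cited to Marsaglia (1972), so there is no internal argument to compare yours against. Your route — orthogonal invariance of the standard Gaussian density (which depends on $z$ only through $\|z\|$), the observation that normalization commutes with orthogonal maps so that $W = Z/\|Z\|$ inherits $O(t)$-invariance, and then uniqueness of the $O(t)$-invariant Borel probability measure on $S^{t-1}$ via transitivity of the action and Haar measure on the compact group $O(t)$ — is the standard justification of this sampling method, and all three steps are sound, including the null-set remark that $\|Z\|=0$ occurs with probability zero. The only point worth flagging is that your argument implicitly requires $\mathcal{US}(t)$ to be \emph{defined} as (or already known to coincide with) the normalized surface measure, i.e., the unique rotation-invariant probability measure on the sphere; since the paper never formalizes its definition of $\mathcal{US}(t)$ beyond calling it ``the uniform distribution on the $(t-1)$-dimensional unit sphere,'' taking that identification as the definition is reasonable, and then your proof is complete.
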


\begin{lemma}
\label{uniform}
If $Z\sim \mathcal{US}(t)$, then for each $x\in\mathbb{R}^t$ we have
$$\E \langle x, Z\rangle^2 = \frac{\|x\|^2}{t}.$$
\end{lemma}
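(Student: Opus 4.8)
The plan is to reduce the statement to the two elementary second-moment identities $\E[Z_iZ_j]=0$ for $i\neq j$ and $\E[Z_i^2]=1/t$, and to obtain both from the symmetries of $\mathcal{US}(t)$ that are made explicit by Proposition~\ref{us}. Writing $\langle x, Z\rangle = \sum_{i=1}^t x_i Z_i$, I would first expand
$$\langle x, Z\rangle^2 = \sum_{i=1}^t\sum_{j=1}^t x_i x_j Z_i Z_j,$$
and then, by linearity of expectation, obtain $\E\langle x,Z\rangle^2 = \sum_{i,j} x_i x_j \E[Z_iZ_j]$. So everything comes down to computing the entries of the covariance-type matrix $\E[Z_iZ_j]$.

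For the off-diagonal terms I would use that $\mathcal{US}(t)$ is invariant under flipping the sign of a single coordinate. Concretely, via Proposition~\ref{us} the vector $Z$ has the same distribution as $(W_1,\ldots,W_t)/\|W\|$ with $W_k \sim \mathcal{N}(0,1)$ independent, and the joint density of $W$ is invariant under $W_i \mapsto -W_i$; hence $Z_iZ_j$ and $-Z_iZ_j$ have the same distribution when $i\neq j$, forcing $\E[Z_iZ_j]=0$. For the diagonal terms I would use permutation symmetry: again from Proposition~\ref{us}, all the $Z_i^2$ have the same expectation, while on the unit sphere $\sum_{i=1}^t Z_i^2 = 1$ identically. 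Taking expectations in the last identity gives $t\cdot\E[Z_1^2] = 1$, so $\E[Z_i^2] = 1/t$ for every $i$.

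Substituting these values back leaves only the diagonal contribution, $\E\langle x, Z\rangle^2 = \sum_{i=1}^t x_i^2 \cdot \tfrac{1}{t} = \|x\|^2/t$, as claimed. An alternative route that avoids writing out coordinates is to invoke full rotational invariance of $\mathcal{US}(t)$: it makes $\E\langle u, Z\rangle^2$ the same constant $C$ for every unit vector $u$, and summing over an orthonormal basis gives $tC = \E\|Z\|^2 = 1$, after which the general case follows by scaling $x = \|x\|\,\widehat{x}$. I do not expect any genuine obstacle here; the only point requiring care is to justify the invariance claims rigorously rather than by appeal to intuition, and Proposition~\ref{us} is exactly the tool that lets the sign-flip and permutation symmetries be read off from the manifestly symmetric Gaussian density.
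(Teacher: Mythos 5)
Your proof is correct, but it gets from the special case to general $x$ by a genuinely different mechanism than the paper. The shared part: your computation of the diagonal terms, $\E[Z_i^2] = 1/t$ via permutation symmetry of the $Z_i^2$ together with $\sum_{i=1}^t Z_i^2 = 1$, is exactly the paper's argument for $x = e_1$. Where you diverge: the paper handles arbitrary $x$ by choosing an orthogonal matrix $A$ with $x/\|x\| = Ae_1$ and using Proposition~\ref{us} to conclude that $A^T Z$ and $Z$ are identically distributed, reducing the general case to the $e_1$ case; this step requires full rotational invariance of the standard Gaussian vector. You instead expand $\langle x, Z\rangle^2$ bilinearly and kill the cross terms by proving $\E[Z_iZ_j] = 0$ for $i \neq j$ via the single-coordinate sign flip $W_i \mapsto -W_i$, which is a strictly weaker symmetry: it is visible coordinate-by-coordinate in the product density, with no appeal to orthogonal invariance (and the expectation manipulations are safe since $|Z_iZ_j|\le 1$). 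What each approach buys: yours is more elementary and self-contained, using only sign flips and permutations; the paper's rotation trick avoids the double sum and the covariance computation entirely, and it generalizes immediately to any functional of $\langle x, Z\rangle$ rather than just the second moment. Your sketched alternative route (constancy of $\E\langle u, Z\rangle^2$ over unit vectors $u$, then summing over an orthonormal basis) is essentially the paper's argument rearranged, since that constancy claim is precisely what the orthogonal-invariance step delivers and would have to be justified the same way.
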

\begin{proof}
Let $Z_1, \ldots, Z_t$ be random variables from Proposition \ref{us}. Assume that $x = (1, 0, \ldots, 0)$. Then we have
$$\langle x, Z\rangle^2 = \frac{Z_1^2}{Z_1^2 + \ldots + Z_t^2}.$$
Random variables 
$$\frac{Z_1^2}{Z_1^2 + \ldots + Z_t^2},\, \frac{Z_2^2}{Z_1^2 + \ldots + Z_t^2},\, \ldots, \frac{Z_t^2}{Z_1^2 + \ldots + Z_t^2}$$
are identically distributed. Hence
$$1 = \E\frac{Z_1^2 + \ldots + Z_t^2}{Z_1^2 + \ldots + Z_t^2} = t\E\frac{Z_1^2}{Z_1^2 + \ldots + Z_t^2} = t\E (x, Z)^2.$$
Thus lemma is proved for $x = e_1 = (1, 0, \ldots, 0)$.

 Consider any other $x\in\mathbb{R}^t$. If $x = 0$, lemma is obvious. Otherwise there exists an orthogonal $n\times n$ matrix $A$ such that $\frac{x}{\|x\|} = Ae_1$. Now consider the vector $A^T Z$.
Proposition \ref{us} implies that vectors $A^T Z$ and $Z$ are identically distributed. Hence
$$\E \langle x, Z\rangle^2 = \|x\|^2 \E \langle Ae_1, Z\rangle^2 = \|x\|^2\E \langle e_1, A^T Z\rangle^2 = \|x\|^2 \E \langle e_1, Z\rangle^2 = \frac{\|x\|^2}{t}.$$
\end{proof}

Now we are able to construct the protocol for Theorem \ref{upper_bound}.

\begin{proof}[Proof of Theorem \ref{upper_bound}]
Set $b = \left\lceil 4n \sqrt[3]{\frac{s}{U}} \right\rceil$. If $b > n$, then the theorem \ref{upper_bound} states that $R^0_{e^{-s}}(\GHD^n_{U,\, L})$ is linear in $n$, which is trivial. Therefore we will assume that $b\le n$. Communication complexity of the protocol will be $O(b\log n)$. Set $a = \left\lceil \frac{n}{b}\right\rceil$ and
$$x_{n + 1} = \ldots = x_{ab} = y_{n + 1} = \ldots = y_{ab} = 0.$$

Note that

\begin{equation}
\label{ab}
ab = \left\lceil \frac{n}{b}\right\rceil b \ge \frac{n}{b}\cdot b = n, \qquad ab = \left\lceil\frac{n}{b}\right\rceil b\le \left(\frac{n}{b} + 1\right)b = n + b \le 2n.
\end{equation}

Alice and Bob transform their inputs $x, y$ to vectors $\alpha, \beta\in \mathbb{R}^{ab}$, where
$$\alpha = (x_1, \ldots, x_n, x_{n + 1}, \ldots, x_{ab}), \qquad \beta = (y_1, \ldots, y_n, y_{n + 1}, \ldots, y_{ab}).$$

Note that $H(x, y) = \|\alpha - \beta\|^2$. Alice and Bob divide $\alpha$ and $\beta$ into $b$ blocks of size $a$:
$$\alpha_i = (x_{ia - a + 1}, \ldots, x_{ia}),\qquad \beta_i = (y_{ia - a + 1}, \ldots, y_{ia}), \qquad i = 1, \ldots b.$$

The protocol runs as follows. Alice and Bob sample $b$ independent random vectors $U_1, \ldots, U_b$, each of them according to the distribution $\mathcal{US}(a)$. Then Alice computes $b$ numbers
$$\langle \alpha_1, U_1\rangle, \ldots, \langle \alpha_b, U_b\rangle,$$
and sends their approximations to Bob. More specifically, 
let $r_i$ be the closest to $\langle \alpha_i, U_i\rangle$ number in $\left\{\frac{m}{n^3}\left.\right| m\in\mathbb{Z}\right\}$. Note that

\begin{equation}
\label{approximation}
|r_i - \langle \alpha_i, U_i\rangle| \le \frac{1}{n^3}.
\end{equation} Alice sends $r_1, \ldots, r_b$ to Bob, each number specified by $O(\log n)$ bits. 
 Bob computes
$$T^\prime = (r_1 - \langle \beta_1, U_1\rangle)^2 + \ldots + (r_b - \langle \beta_b, U_b\rangle)^2,$$
If $T^\prime > L + \frac{5}{n}$, then Bob sends 1 to Alice. Otherwise Bob sends 0 to Alice. 

Communication complexity of the protocol is $O(b\log n)$. Now we have to estimate error probability. We first show that $T^\prime \le L + \frac{5}{n}$ whenever $H(x, y) \le L$ and thus the protocol does note err in this case. To this end consider the random variable 
$$T = \langle \alpha_1 - \beta_1, U_1\rangle^2 + \ldots + \langle \alpha_b - \beta_d, U_b\rangle^2.$$
Note that
$$
\begin{aligned}
H(x, y) &= \|\alpha - \beta\|^2 = \|\alpha_1 - \beta_1\|^2 + \ldots + \|\alpha_b - \beta_b\|^2\\
&\ge \langle \alpha_1 - \beta_1, U_1\rangle^2 + \ldots + \langle\alpha_b - \beta_b, U_b\rangle^2 = T.
\end{aligned}
$$

Let us show that $|T^\prime - T|$ is at most $\frac{5}{n}$. 
Denote $P_i =  r_i - \langle \beta_i, U_i\rangle$ and $Q_i = \langle \alpha_i - \beta_i,  U_i\rangle$. By definition

$$T^\prime = \sum\limits_{i = 1}^b P_i^2, \qquad T = \sum\limits_{i = 1}^b Q_i^2.$$
Thus $|T^\prime - T| \le \sum\limits_{i = 1}^b |P_i^2 - Q_i^2| =  \sum\limits_{i = 1}^b |P_i - Q_i| \cdot |P_i + Q_i|$.  Let us bound $|P_i - Q_i|$ and $|P_i + Q_i|$ separately. By \eqref{approximation} $|P_i - Q_i| \le \frac{1}{n^3}$.  By definition $|P_i + Q_i|$ is at most

\begin{align*}
|P_i + Q_i| &= \left| r_i + \langle\alpha_i, U_i\rangle - 2\langle \beta_i, U_i\rangle\right|\\
&\le |r_i| + \left|\langle \alpha_i, U_i\rangle\right| + 2\left|\langle \beta_i, U_i\rangle\right|\\
&\le 2 \left|\langle \alpha_i, U_i\rangle\right| + \frac{1}{n^3} + 2\left|\langle\beta_i, U_i\rangle\right|
\end{align*}
(again we use that $r_i$ is $\frac{1}{n^3}$-close to $\langle\alpha_i, U_i\rangle$). Coordinates of $\alpha_i$ and $\beta_i$ are zeros and ones  and there are at most $n$ ones among them. Hence

$$ \left|\langle \alpha, U_i\rangle\right| \le \|\alpha_i\|\le \sqrt{n}, \qquad \left|\langle \beta_i, U_i\rangle\right| \le \|\beta_i\|\le \sqrt{n},$$
   and therefore $|P_i + Q_i| \le 4\sqrt{n} + \frac{1}{n^3} \le 5n$. Finally
 \begin{align*}
 |T - T^\prime| \le \sum\limits_{i = 1}^b |P_i - Q_i| \cdot |P_i + Q_i| \le b \cdot\frac{1}{n^3} \cdot5n \le \frac{5}{n},
 \end{align*}
 since $b\le n$.

Assume that $H(x, y) = \|\alpha - \beta\|^2 \le L$. Then
$$T^\prime \le T + \frac{5}{n} \le L + \frac{5}{n}.$$
In this case the protocol always outputs 0. 

Now assume that $H(x, y) = \|\alpha - \beta\|^2 \ge U$. We will show that event $T^\prime \le L + \frac{5}{n}$ happens with small probability. By Lemma \ref{uniform} we have that
$$\E T = \frac{\|\alpha_1 - \beta_1\|^2}{a} + \ldots +  \frac{\|\alpha_b - \beta_b\|^2}{a} = \frac{\|\alpha - \beta\|^2}{a} = \frac{H(x, y)}{a} \ge \frac{U}{a}.$$

For each $i = 1, \ldots b$ we have
$$\langle \alpha_i - \beta_i, U_i\rangle^2 \in \left[0, \,\|\alpha_i - \beta_i\|^2\right]$$
with probability 1. To finish the proof we use the Hoeffding inequality:

\begin{proposition}[\cite{hoeffding1963probability}]
If random variables $\chi_1, \ldots, \chi_m$ are independent and for each $i = 1, \ldots m$
$$\chi_i\in[a_i, b_i]$$
with probability 1, then for every positive $\delta$
$$\Pr\left[\chi_1 + \ldots + \chi_m \le \E(\chi_1 + \ldots + \chi_m) - \delta\right] \le \exp\left\{ -\frac{2\delta^2}{\sum\limits_{i = 1}^m(b_i - a_i)^2}\right\}.$$
\end{proposition}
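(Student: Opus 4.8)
The plan is to prove this as the classical Hoeffding bound via the exponential moment (Chernoff) method: first reduce to centered variables, then control a single-variable moment generating function, multiply using independence, and finally optimize a free parameter. Write $S = \chi_1 + \ldots + \chi_m$ and set $X_i = \chi_i - \E\chi_i$, so that each $X_i$ has mean zero and lies in an interval of the same width $b_i - a_i$. Then $S - \E S = X_1 + \ldots + X_m$, and the event whose probability we must bound is $\{X_1 + \ldots + X_m \le -\delta\}$.

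First I would apply the exponential Markov inequality: for any $t > 0$,
\[
\Pr[X_1 + \ldots + X_m \le -\delta] = \Pr\left[e^{-t(X_1 + \ldots + X_m)} \ge e^{t\delta}\right] \le e^{-t\delta}\,\E\left[e^{-t(X_1 + \ldots + X_m)}\right].
\]
Because the $X_i$ are independent, the expectation factorizes as $\prod_{i=1}^m \E[e^{-tX_i}]$, so it suffices to bound the moment generating function of each bounded, mean-zero variable separately.

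The heart of the argument, and the main obstacle, is \emph{Hoeffding's lemma}: if $X$ has $\E X = 0$ and $X \in [c, d]$ almost surely, then $\E[e^{sX}] \le \exp\left(s^2(d-c)^2/8\right)$ for every real $s$. I would prove this by setting $\psi(s) = \log \E[e^{sX}]$ and verifying that $\psi(0) = 0$, that $\psi'(0) = \E X = 0$, and that $\psi''(s) \le (d-c)^2/4$ for all $s$. The last inequality holds because $\psi''(s)$ equals the variance of $X$ under the tilted measure whose density is proportional to $e^{sX}$, and any random variable supported in $[c,d]$ has variance at most $(d-c)^2/4$. A second-order Taylor expansion with remainder then gives $\psi(s) \le s^2(d-c)^2/8$, which is exactly the claim. (One can sidestep the calculus using convexity: since $x \mapsto e^{sx}$ is convex, $e^{sX} \le \frac{d-X}{d-c}e^{sc} + \frac{X-c}{d-c}e^{sd}$ pointwise on $[c,d]$; taking expectations and using $\E X = 0$ reduces the problem to bounding an explicit function of $s$.) This variance-of-bounded-variables estimate is the only genuinely delicate point; everything around it is routine.

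Applying Hoeffding's lemma to each $X_i$ with $s = -t$ yields $\E[e^{-tX_i}] \le \exp\left(t^2(b_i-a_i)^2/8\right)$, and multiplying over $i$ gives
\[
\Pr[X_1 + \ldots + X_m \le -\delta] \le \exp\left(-t\delta + \frac{t^2}{8}\sum_{i=1}^m (b_i - a_i)^2\right).
\]
Finally I would optimize over $t > 0$. The exponent is a quadratic in $t$, minimized at $t = 4\delta\big/\sum_{i=1}^m (b_i - a_i)^2$, and substituting this value collapses the right-hand side to $\exp\left(-2\delta^2\big/\sum_{i=1}^m (b_i - a_i)^2\right)$, which is precisely the stated bound.
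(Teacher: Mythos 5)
Your proof is correct and complete: the centering step, the exponential Markov inequality, the factorization by independence, Hoeffding's lemma with the tilted-measure variance bound (or the convexity interpolation you mention as an alternative), and the optimization $t = 4\delta\big/\sum_{i=1}^m (b_i - a_i)^2$ all check out, and the final exponent $-2\delta^2\big/\sum_{i=1}^m(b_i-a_i)^2$ is exactly as stated. The paper offers no proof of this proposition --- it is imported by citation from Hoeffding's 1963 paper --- and your argument is precisely the classical exponential-moment proof from that source, so there is nothing to compare beyond noting the match.
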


The following chain of inequalities finishes the proof.

\begin{align}
\label{chain_1}
\Pr\left[T^\prime \le L + \frac{5}{n}\right] &\le \Pr\left[T \le L + \frac{10}{n}\right]\\
\label{chain_2}
&\le \Pr\left[T \le \E T - E T/2\right]\\
\label{chain_3}
&\le \exp\left\{-\frac{(\E T)^2}{2(\|\alpha_1 - \beta_1\|^4 + \ldots + \|\alpha_b - \beta_b\|^4)}\right\}\\
\label{chain_4}
&\le \exp\left\{-\frac{\|\alpha - \beta\|^4/a^2}{2a(\|\alpha_1 - \beta_1\|^2 + \ldots + \|\alpha_b - \beta_b\|^2)}\right\}\\
\label{chain_5}
&\le \exp\left\{-\frac{U}{2a^3}\right\}\\
\label{chain_6}
 &\le \exp\{-s\}.
\end{align}

Let us explain it step by step.
\begin{itemize}
\item\eqref{chain_1} holds because $|T - T^\prime| \le \frac{5}{n}$;
\item First of all, by definition of $b$ and since $s \ge \frac{\left(L + \frac{10}{n}\right)^3}{U^2}$ we have

$$b^3 \ge \frac{4^3 sn^3}{U} \ge \frac{4^3 \frac{\left(L + \frac{10}{n}\right)^3}{U^2} n^3}{U} = \left(\frac{4n\left(L + \frac{10}{n}\right)}{U}\right)^3,$$
and hence $b \ge \frac{4n\left(L + \frac{10}{n}\right)}{U}$. Now recall that $\E T \ge \frac{U}{a}$ and by \eqref{ab} $ab \le 2n$. Therefore
$$\frac{ET}{2} \ge \frac{U}{2a} \ge \frac{bU}{4n} \ge L + \frac{10}{n}$$
and \eqref{chain_2} follows.

\item \eqref{chain_3} holds because of Hoeffding inequality, applied to 
$$T = \langle \alpha_1 - \beta_1, U_1\rangle^2 + \ldots + \langle \alpha_b - \beta_b, U_b\rangle^2.$$

\item \eqref{chain_4} holds because $\|\alpha_i- \beta_i\|^2 \le a$ and hence
$$\|\alpha_i- \beta_i\|^4 \le a\|\alpha_i- \beta_i\|^2.$$

\item \eqref{chain_5} holds because 
$$\|\alpha_1 - \beta_1\|^2 + \ldots + \|\alpha_b - \beta_b\|^2 = \|\alpha - \beta\|^2$$ and $\|\alpha - \beta\|^2\ge U$.

\item For \eqref{chain_6} again recall that $a \le \frac{2n}{b}$ and $b^3 \ge \frac{4^3 sn^3}{U}$.
$$\frac{U}{2a^3} \ge \frac{U}{2 \left(\frac{2n}{b}\right)^3} = \frac{b^3 U}{16n^3} \ge s.$$
\end{itemize}
\end{proof}

\section{Deterministic Communication Complexity of $\GHD^n_{0,\, t}$}
\label{section_4}

\subsection{Proof of Theorem \ref{det}}
Observe that
$$\D(\GHD^n_{0,\, n}) = 2 = n - \log_2 V_2\left(n, \left\lfloor \frac{n}{2}\right\rfloor\right) + O(\log n).$$
Hence we can assume that $t < n$.  

Consider the protocol $\pi$ witnessing $\D(\GHD^n_{0,\, t})$. Let $\pi(x, y)$ denote the leaf in protocol in which Alice and Bob come when Alice have $x$ on input and Bob has $y$ on input. If $l$ is a 0--leaf of the protocol $\pi$, consider the set
$$A_l = \left\{x\in\{0, 1\}^n\left.\right| \pi(x, x) = l\right\}$$
Note that $\diam(A_l) \le t - 1$. Indeed, assume that $x, y\in A_l$ and $H(x, y) \ge t$. At the same time
$$\pi(x, x) = l,\qquad \pi(y, y) = l\implies \pi(x, y) = l.$$
This contradicts the fact that $l$ is a 0--leaf of $\pi$. Observe that
$$t - 1 = 2\left(\frac{t}{2} - \frac{1}{2}\right)\le 2\left\lfloor\frac{t}{2}\right\rfloor, \qquad n \ge t + 1\ge 2\left\lfloor\frac{t}{2}\right\rfloor + 1.$$

Hence by Theorem \ref{size_theorem}
$$|A_l| \le V_2\left(n, \left\lfloor\frac{t}{2}\right\rfloor\right).$$

If both parties have the same $x\in\{0, 1\}^n$ on input, they must come to some 0--leaf. Hence
$$\left\{A_l\left.\right| l \,\,\mbox{is a 0--leaf of $\pi$ }\right\}$$
is a covering of $\{0, 1\}^n$. This covering has size at most $2^{CC(\pi)}$ and each set of the covering has size at most $V_2\left(n, \left\lfloor\frac{t}{2}\right\rfloor\right)$. Therefore
$$2^{CC(\pi)} V_2\left(n, \left\lfloor\frac{t}{2}\right\rfloor\right) \ge 2^n$$
and
$$\D(\GHD^n_{0,\, t}) = CC(\pi) \ge n - \log_2 V_2\left(n, \left\lfloor\frac{t}{2}\right\rfloor\right).$$

Let us prove the upper bound on $\D(\GHD^n_{0,\, t})$. 
Let $C$ be the covering code of radius $\left\lfloor\frac{t - 1}{2}\right\rfloor$ and size at most 
$$O\left(\frac{n2^n}{V_2\left(n, \left\lfloor\frac{t - 1}{2}\right\rfloor\right)}\right),$$
existing by Proposition \ref{good_covering_code}.

Alice computes
$$c = \arg\min\limits_{z\in C} H(z, x)$$
and sends $c$ to Bob. Since $c\in C$, it takes at most
$$\log_2 |C| + 1 = \log_2 O\left(\frac{n2^n}{V_2\left(n, \left\lfloor\frac{t - 1}{2}\right\rfloor\right)}\right) = n - \log_2 V_2\left(n, \left\lfloor\frac{t - 1}{2}\right\rfloor\right) + O(\log n)$$
bits. If $H(c, y) \le \left\lfloor\frac{t - 1}{2}\right\rfloor$, Bob sends 0 to Alice. Otherwise, Bob sends 1 to Alice.

Let us prove that the described protocol computes $\GHD^n_{0,\, t}$. Note that by definition of $c$ and $C$ we have $H(c, x) \le \left\lfloor\frac{t - 1}{2}\right\rfloor$. Hence if $x = y$, then $H(c, y)\le \left\lfloor\frac{t - 1}{2}\right\rfloor$. Assume now that $H(x, y) \ge t$. Then
$$H(x, c) + H(y, c) \ge H(x, y) \ge t > 2\left\lfloor\frac{t - 1}{2}\right\rfloor$$
and hence
$$H(y, c) >  2\left\lfloor\frac{t - 1}{2}\right\rfloor - H(x, c) \ge  \left\lfloor\frac{t - 1}{2}\right\rfloor.$$

Observe that
$$
\begin{aligned}
V_2\left(n, \left\lfloor\frac{t}{2}\right\rfloor\right) &\le V_2\left(n, \left\lfloor\frac{t - 1}{2}\right\rfloor\right) + \binom{n}{\left\lfloor \frac{t}{2}\right\rfloor} \\
&= V_2\left(n, \left\lfloor\frac{t - 1}{2}\right\rfloor\right) + \binom{n}{\left\lfloor \frac{t}{2}\right\rfloor - 1}\cdot \frac{n - \left\lfloor\frac{t}{2}\right\rfloor}{\left\lfloor\frac{t}{2}\right\rfloor}\\
&\le (1 + n) V_2\left(n, \left\lfloor\frac{t - 1}{2}\right\rfloor\right).
\end{aligned}
$$
Therefore the communication complexity of the protocol is at most
$$ n - \log_2 V_2\left(n, \left\lfloor\frac{t - 1}{2}\right\rfloor\right) + O(\log n) \le  n - \log_2 V_2\left(n, \left\lfloor\frac{t}{2}\right\rfloor\right) + O(\log n).$$

\subsection{Application to the Number of Distinct Elements (Proof of Theorem \ref{streaming_lower_bound})}

Let $F_0$ denote the number of distinct elements in a given data stream of size $2n$ with tokens drawn from the universe $U = \{1, 2, \ldots, 2n\}$. We say that 
a deterministic $p$-pass streaming algorithm $A$ with memory $S$ for computing $F_0$ is \emph{$c$--approximate} if $A$ outputs 
a number $E$ such that $F_0 \le E < cF_0$. We claim that for $c < 2$ A requires $\Omega\left(\frac{n(2 - c)^2}{p}\right)$ memory. Let us start with the case $p = 1$. 

The first result of that kind was proved in \cite{alon1996space}. It states that if $|E - F_0| < cF_0$, where $c=0.1$,
then $A$ requires $\Omega(n)$ memory. A linear  lower 
bound for memory for a larger $c$ can be obtained by reduction to the deterministic communication complexity of equality, as it done, for example, in \cite{chakrabarti2009data}. Indeed, for each $\alpha < \frac{1}{2}$ there is an error-correcting code $\mathsf{ECC}:\{0, 1\}^k\to\{0, 1\}^n$ with relative distance at least $\alpha$ and $k = \Omega_\alpha(n)$. 
Assume that Alice has $x\in\{0, 1\}^k$
and Bob has $y\in\{0, 1\}^k$ 
their inputs. They want to decide whether $x = y$. Alice and Bob transform their inputs into 2 data streams $u$ and $v$:
$$u = \langle u_1, u_2, \ldots, u_n\rangle, \qquad v = \langle v_1, v_2, \ldots, v_n\rangle,$$
where

\begin{equation}
\label{reduction}
u_i = n\cdot\mathsf{ECC}(x)_i  + i, \qquad v_i = n\cdot\mathsf{ECC}(y)_i  + i.
\end{equation}

Alice emulates $A$ on $u$. Then, using $S$ bits, she sends 
a description of the current state of $A$ to Bob and Bob 
emulates $A$ on $v$, starting with the state he received from Alice. 
Finally Bob knows the output of $A$ 
for the stream that is equal 
to  the concatenation of $u$ and $v$. Notice that the number of the distinct elements $F_0$ in this concatenation equals  $n + H\left(\mathsf{ECC}(x), \mathsf{ECC}(y)\right)$. If $A$ is a 
$(1 + \alpha)$--approximate (that is, $c = 1 + \alpha$), then Bob is able to decide whether $x = y$ or not. 
Indeed, if $x = y$, than $E < c F_0 = (1 + \alpha) n$. If $x \neq y$, then by definition of $\mathsf{ECC}$ we have that $E \ge F_0 \ge n + \alpha n = (1 + \alpha)n$. As deterministic 
communication complexity of the equality predicate
on $k$-bit strings is $k$, a linear lower bound $S = \Omega(k) = \Omega(n)$ 
for the space complexity of $A$ for $c < \frac{3}{2}$ follows.

In this argument we only needed
a linear lower bound for 1-round communication complexity of equality predicate,
which is trivial. However for arbitrary $p$ we already need 
a linear lower bound  for complexity of equality predicate
for $2p$-round protocols. The lower bound for the space 
complexity we obtain by this argument becomes  $\Omega(n/p)$, as 
in each round Alice in Bob exchange $S$ bits.

Instead of binary error--correcting codes one can use error--correcting codes 
with a 
larger alphabet and relative distance close to 1. The same reduction provides a linear lower bound for $c < 2$. The point is that the size of the universe increases and the problem becomes harder.

Theorem \ref{det} implies a linear lower bound on the space complexity of $A$ for $c < 2$ in the case when the size of the universe and the size of a data stream are equal. Indeed, assume that Alice has $x\in\{0, 1\}^n$ and Bob has $y\in\{0, 1\}^n$, as their inputs.
Assume also that they are  promised that 
either $x = y$ or $H(x, y)\ge t = \lceil n(c - 1)\rceil$. Again, Alice and Bob transform their input into data streams $u$ and $v$ but with \eqref{reduction} replaced by
$$u_i = nx_i  + i, \qquad v_i = n y_i  + i.$$
The expression for $F_0$ becomes $F_0 = n + H(x, y)$. Thus Alice and Bob can solve $\GHD^n_{0,\, t}$ using $2pS$ bits of communication. 
Indeed, if $x = y$, then $E < c F_0 = c n \le n + t$ since by definition $t \ge (c - 1) n$. If $H(x, y) \ge t$, then $E \ge F_0 \ge n + t$.

We conclude that by theorem \ref{det} that $pS$ must be at least
$$
pS = \Omega\left(n - \log_2 V_2\left(n, \frac{t}{2}\right) + \log n\right) = \Omega\left( \left(\frac{1}{2} - \frac{t}{2n}\right)^2 n\right) = \Omega(n (2 - c)^2 ).
$$

\bibliographystyle{acm}
\bibliography{ref}

\begin{thebibliography}{10}

\bibitem{alon1996space}
{\sc Alon, N., Matias, Y., and Szegedy, M.}
\newblock The space complexity of approximating the frequency moments.
\newblock In {\em Proceedings of the twenty-eighth annual ACM symposium on
  Theory of computing\/} (1996), ACM, pp.~20--29.

\bibitem{brody2010better}
{\sc Brody, J., Chakrabarti, A., Regev, O., Vidick, T., and De~Wolf, R.}
\newblock Better gap-hamming lower bounds via better round elimination.
\newblock In {\em Approximation, Randomization, and Combinatorial Optimization.
  Algorithms and Techniques}. Springer, 2010, pp.~476--489.

\bibitem{chakrabarti2009data}
{\sc Chakrabarti, A.}
\newblock Data stream algorithms.
\newblock {\em Computer Science 49\/}, 149.

\bibitem{chakrabarti2012optimal}
{\sc Chakrabarti, A., and Regev, O.}
\newblock An optimal lower bound on the communication complexity of
  gap-hamming-distance.
\newblock {\em SIAM Journal on Computing 41}, 5 (2012), 1299--1317.

\bibitem{cohen1997covering}
{\sc Cohen, G., Honkala, I., Litsyn, S., and Lobstein, A.}
\newblock {\em Covering codes}, vol.~54.
\newblock Elsevier, 1997.

\bibitem{hoeffding1963probability}
{\sc Hoeffding, W.}
\newblock Probability inequalities for sums of bounded random variables.
\newblock {\em Journal of the American statistical association 58}, 301 (1963),
  13--30.

\bibitem{indyk2003tight}
{\sc Indyk, P., and Woodruff, D.}
\newblock Tight lower bounds for the distinct elements problem.
\newblock In {\em Foundations of Computer Science, 2003. Proceedings. 44th
  Annual IEEE Symposium on\/} (2003), IEEE, pp.~283--288.

\bibitem{marsaglia1972choosing}
{\sc Marsaglia, G.}
\newblock Choosing a point from the surface of a sphere.
\newblock {\em The Annals of Mathematical Statistics 43}, 2 (1972), 645--646.

\bibitem{sherstov2012communication}
{\sc Sherstov, A.~A.}
\newblock The communication complexity of gap hamming distance.
\newblock {\em Theory of Computing 8}, 1 (2012), 197--208.

\bibitem{sudan2001algorithmic}
{\sc Sudan, M.}
\newblock {\em Algorithmic Introduction to Coding Theory: Lecture Notes}.
\newblock 2001.

\bibitem{vidick2012concentration}
{\sc Vidick, T.}
\newblock A concentration inequality for the overlap of a vector on a large
  set, with application to the communication complexity of the
  gap-hamming-distance problem.
\newblock {\em Chicago Journal of Theoretical Computer Science 1\/} (2012).

\end{thebibliography}

\end{document}